\def\BibTeX{{\rmfamily B\kern-.05em{\scshape i\kern-.025em b}\kern-.08em \TeX}}
\newtheorem{thm}{Theorem}[section]
\newtheorem{cor}[thm]{Corollary}
\theoremstyle{remark}
\newtheorem{rem}[thm]{Remark}
\newcommand{\Real}{\mathbb R}
\newcommand{\eps}{\varepsilon}
\newcommand{\mc}{\mathcal}
\newcommand{\D}{\mathcal{D}}
\newcommand{\g}{\gamma}
\renewcommand{\a}{\alpha}
\renewcommand{\l}{\lambda}
\title{\LARGE \bf A Unified Mechanism Design Framework for Networked Systems}
\author{Tansu Alpcan, Holger Boche, and Siddharth Naik  
\thanks{This work has been supported in part by Deutsche Telekom Laboratories.}
\thanks{Tansu Alpcan is with Technical University of Berlin, Deutsche Telekom Laboratories, in Berlin, Germany.
            {\tt\small alpcan@sec.t-labs.tu-berlin.de}}%
\thanks{Holger Boche and Siddharth Naik are with Technical University of Berlin, Heinrich Hertz Institute and Heinrich Hertz
Chair for Mobile Communication, respectively, in Berlin, Germany. 
        {\tt\small holger.boche@mk.tu-berlin.de} and  {\tt\small naik@hhi.fraunhofer.de}}%
}
\begin{document}
\maketitle
\thispagestyle{empty}


\begin{abstract}
Mechanisms such as auctions and pricing schemes are utilized to
design strategic (noncooperative) games for networked systems. 
Although the participating players are selfish, these mechanisms ensure 
that the game outcome is optimal with respect to a global criterion (e.g.
maximizing a social welfare function), preference-compatible, and
strategy-proof, i.e. players have no reason to deceive the
designer. The mechanism designer achieves these objectives by
introducing specific rules and incentives to the players; in this case
by adding resource prices to their utilities. In auction-based mechanisms, 
the mechanism designer explicitly allocates the resources based on bids of
the participants in addition to setting prices. Alternatively, pricing
mechanisms enforce global objectives only by charging the players for the 
resources they have utilized. In either setting, the player preferences represented
by utility functions may be coupled or decoupled, i.e. they depend on other player's
actions or only on player's own actions, respectively. 
The unified framework and its information structures are illustrated
through multiple example resource allocation problems from wireless
and wired networks.

\end{abstract}
\begin{IEEEkeywords}
Game theory, mechanism design, auctions, pricing, interference coupling
\end{IEEEkeywords}

\IEEEpeerreviewmaketitle

\section{Introduction} \label{sec:intro}

\textbf{Game theory} has been enjoying widespread adoption by the engineering community as a distributed 
optimization and control framework for networked systems, partly for taking into account preferences of individual
users, who share and compete for system resources. Resting upon a rich mathematical
foundation, game theoretical approaches, especially strategic (noncooperative) games, have been valuable for analysis and design of various resource allocation protocols in wireless and wired networks. Problems such as rate control, interference management, and power control (e.g. in wireless and optical networks)
have been investigated extensively by the research community using game theoretical methods
\cite{tac06-pavel,alpcan-twc,alpcan-ton,srikantbook}.
 
Game theory models nodes of networked systems as independent and autonomous decision makers with limited
global information, and studies incentives of individual players and effects of their preferences on the overall outcome. The Nash equilibrium (NE), where no player has an incentive to deviate from the NE while others adopt it, is known to be useful solution concept for such games. It is widely adopted for development of distributed and dynamic algorithms assuming some mild existence and uniqueness conditions \cite{tansuphd,basargame}.

Given the broad applicability of game theoretic frameworks, it is not surprising to observe an increasing interest in \textbf{mechanism design}, which studies rules and structure of games such that their outcome achieve certain objectives~\cite{maskin1,lazarSemret1998,johari1,caines1,alpcan-infocom10,hajek1}. 
This is especially relevant in development of distributed control schemes for networks where satisfying certain global 
properties such as \textbf{efficiency} are as important as the solution's compatibility with user incentives. 

A game designer can impose rules and incentives, e.g. in the form of prices, to players of such that the outcome of a
strategic game, for example, the unique Nash equilibrium solution is \textbf{preference-compatible} and at the same time maximizes a certain global objective function such as the sum of player utilities or 
quality-of-service (QoS) constraints. However, this interaction between the designer and  players of the game may create now a separate incentive for the players to misrepresent their utilities to the designer with the purpose of 
selfishly benefiting from it. Therefore, the mechanism designer has a third objective called \textbf{strategy-proofness} (or \textit{truth dominance}), in addition to the goals of efficiency and preference-compatibility.

This paper builds upon earlier work~\cite{cdc09lacra,gamenetsne}, which has presented a decision and control theoretic 
approach to game design taking into account only efficiency and preference-compatibility objectives while assuming that players are
 honest toward the game designer in terms of their preferences. Here, we present an optimization framework for mechanism design that satisfies all three objectives, adding strategy-proofness to the previous two.

The difficulty facing a mechanism designer trying to achieve all three objectives can be best appreciated with a specific example. Consider maximization of the sum of player utilities as the efficiency criterion of a specific problem. Assume that the designer
can impose a pricing scheme on users for their actions as an enforcement method. However, individual player utilities are not directly revealed to anyone. Assume in addition that the underlying
strategic game admits a unique NE solution. The task of the designer is then to find such a mechanism that it moves 
the NE of the game to a point, which maximizes the sum of these unknown player utilities
 (Figure~\ref{fig:neopt2}), while the players try to mislead the 
designer by misrepresenting their actual utility functions. In addition, the designer may not \textit{observe} players actions 
completely bringing additional restrictions to the information flow within the system.

Due to the difficulty of the above described task, there are naturally many impossibility results in the mechanism design 
literature \cite{holger-allerton,hurwicz1972,dasguptaHammondMaskin1979,zhou1991}. 
In contrast, this paper adopts a more constructive engineering approach and focuses on schemes that achieve all three objectives, albeit in some cases only approximately. The algorithms presented and analyzed here are examples of market clearance schemes, where all participants have an incentive to reveal their true preferences, and leading to solutions satisfactory to both designer and players from global and local points of view, respectively. Most of these mechanisms can be intuitively explained by the old adages of 
``\textit{actions speak louder than words}'' (designer deducing players' true preferences by observing their actions) and ``\textit{you get what you paid for}'' (designer charging players for their actions). 

However, we also note that the presented results are obtained only in very specific settings with various assumptions on player preferences (smooth and convex utility functions), on the underlying game (existence and uniqueness of the NE), and on a certain
degree of observability of player actions by the designer.  While these restrictions may decrease applicability of the results to certain
areas of economics, the presented optimization framework is of value in engineering settings, especially for the purpose of analyzing and developing distributed optimization and control schemes for networks.

The main contributions of this paper include:
\begin{itemize}
 \item Development of an unifying optimization framework for mechanism design, which encompasses both auction-based and
pricing mechanisms.
 \item Extension of earlier results on game design~\cite{cdc09lacra,gamenetsne} to mechanism design by taking into account
the strategy-proofness criterion.
 \item Application of the mechanism design framework to resource allocation problems in networks such
as rate control and interference management (power control).
\end{itemize}
The rest of the paper is organized as follows. The next section presents the underlying model and assumptions
of the unified framework developed. Section~\ref{sec:auction} studies auction-based mechanisms. Subsequently,
Section~\ref{sec:pricing} investigates pricing mechanisms. Section~\ref{sec:background}
provides an overview of relevant literature on mechanism design. The paper concludes with remarks of Section~\ref{sec:conclusion}.


\section{Unified Framework} \label{sec:gamedesign}

This section discusses the underlying model and assumptions of the unified framework for mechanism design.

\subsection{Model} \label{sec:model}

At the center of the game and mechanism design model is the \textit{designer} $\D$ who influences $N$  
 \textit{players}, denoted by the set $\mc A$, and participating in a \textbf{strategic (noncooperative) game}. 
These players are autonomous and independent decision makers, who share and compete for limited resources  under the given constraints of the environment. Concurrently,  the designer tries to ensure that the outcome of the game satisfies the desirable properties
of efficiency, preference-compatibility, and strategy-proofness. This setup is applicable to a variety of problems in networking 
(wireless spectrum and bandwidth management) and economics (auctions).

Let us define an $N$-player strategic game, $\mc G$, where 
each player $i \in \mc A$ has a respective  \textbf{decision variable} $x_i$ such that 
$$x=[x_1,\ldots,x_N] \in \mc X \subset \Real^N, $$ 
where $\mc X$ is the decision space of all players. As a starting point, this paper assumes scalar decision variables and a compact and convex decision space. The decision variables may represent, depending on the specific problem formulation, player flow rate, power level, investment, or bidding in an auction. Due to the inherent coupling between the players, the decisions of players directly affect each other's performance as well as the aggregate allocation of limited resources. 

The \textbf{preferences} of the players are captured by utility functions 
$$  U_i (x) : \mc X \rightarrow \Real, \;\;\; \forall i \in \mc A,$$ 
which are chosen to be continuous and differentiable for analytical tractability.
In many cases, the utility functions
have special properties such as concavity or monotonicity due to the underlying problem formulation, or these can be 
assumed to simplify the analysis.

The designer $\D$ devises a \textbf{mechanism} $M$, which can be represented by the mapping $M: \mc X \rightarrow \Real^N$, implemented by introducing incentives in the form of \textit{rules and prices} to players. The latter can be formulated by adding it as a cost term such that the player $i$ has the cost function
\begin{equation} \label{e:usercost}
  J_i(x)= c_i(x) - U_i (x) .
\end{equation}
Thus, the \textbf{player objective} is to  solve the following individual optimization problem in the strategic game
\begin{equation} \label{e:useropt}
 \min_{x_i} J_i(x) ,
\end{equation}
under the given constraints of the strategic game, and rules and prices imposed by the designer.
Specific properties and variants of these rules and prices will be discussed in the subsequent sections.

The \textbf{Nash equilibrium} (NE) is a widely-accepted and useful solution concept in strategic games, where no player has an incentive to deviate from it while others play according to their NE strategies. It plays an important role here since if it is unique, then the NE outcome
automatically satisfies the \textbf{preference-compatibility} criterion, which basically states that the mechanism outcome must coincide with 
the solution of the players' individual optimization problems (\ref{e:useropt}). 

The NE $x^*$ of the game $\mc G$ is formally defined as
$$ x_i^* := \arg \min_{x_i} J_i (x_i, x_{-i}^*) , \;\;\; \forall i \in \mc A,$$
where $x_{-i}^*=[x_1^*,\ldots,x_{i-1}^*,x_{i+1}^*,\ldots, x_N^*]$. The NE is at the same time the intersection point
of players' best responses obtained by solving  (\ref{e:useropt}) individually. If some special convexity and compactness conditions are imposed to the game $\mc G$, then it admits a unique NE solution, which simplifies mechanism and algorithm design significantly. For a detailed discussion on these conditions and properties of NE, we refer to \cite{tansuphd,basargame}.

Similar to player preferences, the \textbf{designer objective}, e.g. maximization of aggregate user utilities or social welfare, can be formulated using a smooth objective function $V$ for the designer:
$$ V(x,U_i(x),c_i(x)) : \mc X \rightarrow \Real,$$ 
where $c_i(x)$ and $U_i(x)$, $i=1,\ldots,N$ are user-specific pricing terms and player utilities,
respectively. Hence, the global optimization problem of the designer is simply $\max_x  V(x,U_i(x),c_i(x))$, which it solves
\textit{indirectly} by setting rules and prices.
In some cases, the objective function $V$ characterizes the desirability of an outcome $x$ from the designers perspective. In other cases when the designer objective is to satisfy certain minimum performance constraints
such as players achieving certain quality-of-service levels, the objective can be characterized by a region (a subset of the game domain $ 
\mc X$). Thus, the designer objective represents and corresponds to the \textbf{efficiency} criterion of the mechanism.

It is important to note that the designer can only influence the outcome of the game indirectly and cannot dictate actions of players (which would have immediately negated preference-compatibility). It has been shown in \cite{gamenetsne} that a function linear in $x_i$, such as $c_i(x)=\a_i x_i$, is sufficient for the designer to (indirectly) manipulate the unique NE outcome in the ideal full information case where the players are honest and open about their preferences. Figure~\ref{fig:neopt2} visualizes this process.
\begin{figure}[htp]
  \centering
  \includegraphics[width=0.6 \columnwidth]{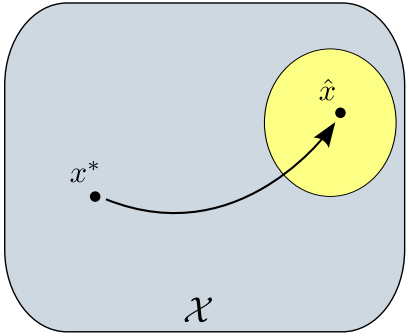}
  \caption{The manipulation of the unique Nash equilibrium, $x^*$ of the game by the mechanism designer $\D$ to a desirable region or point, $\hat x=\arg \max V$. }
\label{fig:neopt2}
\end{figure}

The third and an important criterion of mechanism design is \textbf{strategy-proofness}, which is also referred to as \textit{incentive-compatibility} or \textit{truth dominance}. If a mechanism does not 
possess this property, then the players have an incentive to misrepresent their utilities to the designer and ``\textit{cheat}'' in order to possibly obtain a larger share of the resources. Within the
context of the presented model, this criterion can be formally expressed as:
$$  J_i(x^*) < J_i(\tilde x) \Leftrightarrow c(x^*) - U_i (x^*)  < c(\tilde x) - \tilde U_i (\tilde x) \;\; \forall i \in \mc A,$$
where $\tilde U_i$ is the misrepresented utility, $x^*$ is the original NE solution, and $\tilde x$ is the distorted NE under $\tilde U_i$.
The interaction between the players of the underlying strategic game, $\mc A$, and the mechanism designer, $\D$ is depicted in Figure~\ref{fig:mechdesign1}.

\begin{figure}[htp]
  \centering
  \includegraphics[width=0.8 \columnwidth]{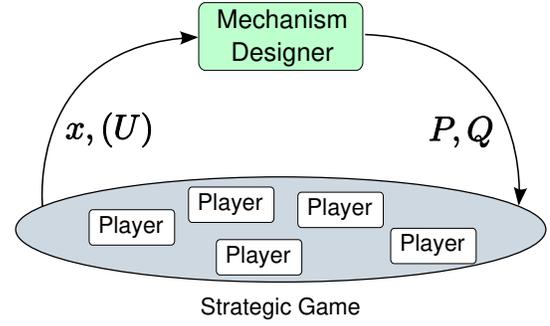}
  \caption{The interaction between the players of the underlying strategic game and the mechanism designer, who observes players actions
$x$ and utilities $U$, while imposing prices $P$ and in auctions an allocation rule $Q$.}
\label{fig:mechdesign1}
\end{figure}

\subsection{Assumptions}

Taking into account the breadth of the field mechanism design, it is useful to clarify the underlying assumptions of the model studied
in this paper. The \textbf{environment} where the players and designer interact is characterized by the following properties:
\begin{itemize}
 \item The available resources, which the players share and compete for, are limited.
 \item The environment imposes restrictions on available information to players and communication between them. Hence, it imposes a certain information structure to distributed mechanisms and sometimes makes it difficult to deploy centralized ones.
 \item The designer may not fully observe the player actions and has often limited information about their preferences.
\end{itemize}

The players share and compete for limited resources in the given environment under its information and communication constraints. Three basic types of resource sharing and coupling are often encountered in a variety of problems in networking:
\begin{enumerate}
 \item \textit{Additive resource sharing:} the players share a finite resource $C$ such that 
$$\sum_{i=1}^N x_i =C .$$
This type of coupling is encountered in bandwidth sharing and rate control in networks.
 \item \textit{Interference coupling} (linear interference): the resource allocated to player $i$, $\g_i$,  is inversely proportional to interference
generated others such that $$ \g_i(x) = \dfrac{h_i x_i}{\sum_{j \neq i} h_j x_j +\sigma}, $$
where $h_i \; \forall i$ and $\sigma$ denote some system parameters. Interference coupling occurs in wireless networks where $\g$ represents
signal-to-interference ratio.
 \item \textit{Multiplicative coupling:} the resource $y_i$ of player $i$ is affected multiplicatively by the decisions of others such that
$$ y_i= x_i \prod_{j \neq i} (1- x_j) .$$
 This type of coupling is seen in random multiple access schemes, e.g. slotted Aloha scheme in wireless networks \cite{tanenbaumbook}.
\end{enumerate}
It is possible to extend these definitions, for example, by making the finite resource $C$ time varying or changing the interference function. Couple of axiomatic frameworks for the second case exist in the literature  \cite{bocheSchubert_ITNet2008,yates1995}. The examples in this paper are of types 1 and 2.

The following assumptions are made on the designer and players:
\begin{itemize}
 \item The designer is honest, i.e. does not try to deceive the players.
 \item Each player acts independently and rationally according to its own self interests. 
 \item The players may try to deceive the designer by hiding or misrepresenting their individual preferences.
 \item Both players and designer follow the rules of the mechanism.
\end{itemize}

Within the scope of the model discussed in the previous subsection, specific formulations of the three criteria of mechanism design are summarized as:
\begin{table}[htp]
\begin{center}
\caption{Three Criteria of Mechanism Design  \label{tbl:mechdesign}}
\begin{tabular}[t]{|l|l|}
\hline 
 \textit{Criterion} & \textit{Formulation in the Model} \\
\hline \hline 
Efficiency   & Designer objective \\ 
Preference Compatibility & Player minimizing own cost \\ 
 & or existence of a unique NE \\  
Strategy-Proofness & No player gains from cheating \\
\hline
\end{tabular}
\end{center}
\end{table}

\section{Auction-Based Mechanisms} \label{sec:auction}

In auction-based mechanisms, the designer uses an allocation rule in addition to
pricing. Hence, the designer \textit{explicitly allocates} the players their share of
resources based on their bids. The players decide on their bids or actions by minimizing
their cost which is a combination of their own utilities and prices imposed by the designer. 
Specifically, the designer $\D$ imposes on a player $i \in \mc A$ a user-specific
\begin{itemize}
 \item resource allocation rule, $Q_i(x)$,
 \item resource pricing, $P_i(x)$,
\end{itemize}
where $x$ denotes the vector of player actions or bids. The specific properties
of these functions will be discussed later as part of individual mechanisms.

As presented in Section~\ref{sec:model},
each player $i$ aims to minimize its own cost $J_i(Q_i(x),P_i(x))$, as in (\ref{e:usercost}), while the designer
tries to achieve the objectives summarized in Table~\ref{tbl:mechdesign}. In some
cases, the designer may only observe the bids imperfectly as a function of the actual
bids, $y=f(x)$. However, in this paper, we assume that all bids are perfectly observable
and $y=x$ for simplicity. Figure~\ref{fig:auctionmech1} visually depicts the auction-based mechanisms described.
\begin{figure}[htp]
  \centering
  \includegraphics[width=\columnwidth]{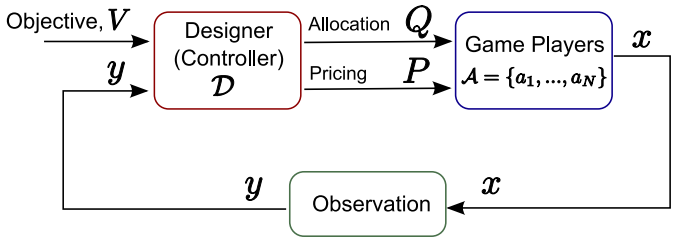}
  \caption{An auction-based mechanism, where the designer $\D$ imposes
a resource allocation rule as well as pricing on players $\mc A$ of the underlying
strategic game, whose bids $x$ may be observed imperfectly as $y$, with the
purpose of satisfying a global objective $V$. }
\label{fig:auctionmech1}
\end{figure}

\subsection{Auctions for Separable Utilities}

Consider, as a starting point, an additive resource sharing scenario where
the players bid for a fixed divisible resource $C$ and are allocated their share captured
by the vector $Q=[Q_1, \ldots, Q_N]$ such that at full utilization
$\sum_i Q_i =C$.

The $i^{th}$ player's individual cost function $J_i(x)$ in terms of  player bids $x$
is defined as
$$  J_i(x)= c_i(x) - U_i (Q_i(x)). $$
The pricing term has the general form of
\begin{equation} \label{e:genformofprice}
  c_i(x)= \int_0^{Q_i(x)} P_i(\xi) d\xi ,
\end{equation}
where $P_i$ denotes the unit price. In accordance with the earlier results~\cite{cdc09lacra,gamenetsne}
and due to the nature of the auction-based mechanism, it is sufficient for the purposes of the designer 
to choose a pricing function linear in $Q_i$,  i.e. $c_i(x)=P_i(x) Q_i(x)$. The player utility function $U_i$ is separable, i.e. it depends only on the individual allocation of the player. It is also assumed to be continuous, strictly concave, and twice differentiable in terms of its argument $Q_i$.
Thus, the cost function of player $i$ can be written as
\begin{equation} \label{e:playercost1}
 J_i(x)= P_i(x) \,Q_i(x) - U_i (Q_i(x)),
\end{equation}
which is strictly convex with respect to $Q_i$ under the assumptions made. 

From a player's perspective, who tries to minimize its cost in terms of the actual resources obtained, the condition
$$ \dfrac{\partial J_i}{\partial Q_i}=\dfrac{\partial c_i}{\partial Q_i} - \dfrac{\partial U_i}{\partial Q_i} =c_i^{\prime} - U_i^{\prime} $$
is necessary and sufficient for optimality. 
Thus  suppressing the dependence of user cost on bids $x$, 
in order for the auction-based mechanism to be \textbf{preference-compatible}, it has to satisfy
\begin{equation} \label{e:incentive1}
P_i(Q)=U_i^{\prime}(Q_i) \;\; \forall i \in \mc A. 
\end{equation}
Furthermore, if additional assumptions are made on $J_i(x)$, it can be shown that 
the game admits a unique NE, $Q^*$ (or $x^*$) \cite{basargame}.

Different from players, the designer $\D$ has two objectives: maximizing the sum of utilities of players and allocating 
all of the existing resource $C$, i.e. its full utilization. Hence, the designer $\D$ solves
the constrained optimization problem
\begin{equation} \label{e:designerobj1}
 \max_Q V(Q) \Leftrightarrow \max_Q \sum_i U_i (Q_i) \text{ such that } \sum_i Q_i=C ,
\end{equation}
in order to find a globally optimal allocation $Q$ that satisfies
this \textbf{efficiency criterion}.  
The associated Lagrangian function is then
$$ L(Q)=\sum_i U_i (Q_i) + \l \left( C- \sum_i Q_i \right)  ,$$
where $\l>0$ is a scalar Lagrange multiplier.
Under the convexity assumptions made, this leads to
\begin{equation} \label{e:global1}
\dfrac{\partial L}{\partial Q_i} \Rightarrow U_i^{\prime}(Q_i)= \l, \; \forall i \in \mc A,
\end{equation}
and the efficiency constraint
\begin{equation} \label{e:constraint1}
\dfrac{\partial L}{\partial \l} \Rightarrow \sum_i Q_i=C.
\end{equation}

\begin{rem}
It is important to note that sum of utility maximization as designer objective, 
i.e. $V=\sum_i U_i$ is  \textbf{only one} possible global objective among
many others such as ensuring a certain QoS to players 
(see \cite{gamenetsne,cdc09lacra} for a more detailed discussion).
\end{rem}

The interaction between the designer and players (see Figure~\ref{fig:mechdesign1}) is
through a \textit{bidding/allocation process} in auction-based mechanisms. Since
the players cannot obtain the resource $Q$ directly, they make 
a bid for their own total cost, which is denoted by the vector $x$. The pricing $P(x)$ 
and allocation $Q(x)$ rules of the auction-based mechanism should
satisfy the efficiency and preference-compatibility criteria discussed above.

A player's bid (or action), $x_i$, is an indicator of the player's willingness to pay and plays a crucial role in devising
a mechanism that is \textbf{strategy-proof}. Formally, a mechanism
is strategy-proof, if no player has an incentive to deviate from
its truthful bid
$$J_i(x_i^*+\delta) \geq J_i(x_i^*) \;\; \forall i \in \mc A, \, \delta,$$
where $\delta \in \Real$ is a scalar and $x^*$ is the outcome (NE) of the underlying
strategic game.

\subsection*{Example 1:}

In the specific resource sharing setting defined, an auction-based mechanism, $\mc M^a$,
can be defined based on the bid of player $i$,
\begin{equation} \label{e:bid1}
 x_i :=P_i(x) Q_i(x),
\end{equation}
the pricing function
\begin{equation} \label{e:price1}
 P_i:=\dfrac{ \sum_{j \neq i} x_j + \omega}{C},
\end{equation}
for a scalar $ \omega>0$ sufficiently large such that $\sum_i Q_i \leq C$,
and the resource allocation rule
\begin{equation} \label{e:Q1}
 Q_i:=  \dfrac{x_i}{\sum_{j \neq i} x_j + \omega}\, C.
\end{equation}
It is also possible to interpret the scaler $\omega$ as a \textit{reserve bid} \cite{huangBerryHonig2006b}.
The next theorem establishes that this mechanism is preference-compatible, strategy-proof, and asymptotically
efficient.

\begin{thm} \label{thm:auction1}
The auction-based mechanism $\mc M^a$ defined by (\ref{e:bid1}), (\ref{e:price1}), and (\ref{e:Q1})
allocates the fixed divisible resource $C$ to a set of selfish rational players $\mc A$ with respective
cost functions (\ref{e:playercost1}) in such a way that the mechanism is preference-compatible, 
strategy-proof, and asymptotically efficient, if
$$ U_i(x_i^*+\delta) - U_i(x_i^*) \leq \delta, \;\; \forall i, \, \forall  \delta \in \Real,$$
where $x^*$ denotes the truthful bid of player $i$ at the NE outcome.
In other words, the outcome of the mechanism ensures that
\begin{itemize}
 \item optimal allocation obtained, $Q^*$, satisfies  \\ 
 $ Q_i^*= \arg \min_{Q_i} J_i (Q) \Rightarrow P_i(Q^*)=U_i^{\prime}(Q_i^*) \;\; \forall i$,
 \item no player has an incentive to deviate from
its truthful bid, $J_i(x_i^*+\delta) \geq J_i(x_i^*), \;\; \forall i, \, \delta$
 \item $Q^*$ solves the constrained optimization problem in (\ref{e:designerobj1}) asymptotically, 
 i.e. as $\lim N \rightarrow \infty$.
\end{itemize}
\end{thm}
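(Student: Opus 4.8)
The plan is to verify the three asserted properties in sequence, throughout exploiting the structural simplification that the bid definition (\ref{e:bid1}) makes the pricing term collapse to $c_i(x) = P_i(x)\,Q_i(x) = x_i$, so that the player cost (\ref{e:playercost1}) becomes simply $J_i(x) = x_i - U_i(Q_i(x))$ with $Q_i(x) = x_i\,C/S_{-i}$, where $S_{-i} := \sum_{j\neq i} x_j + \omega$ is held fixed from player $i$'s viewpoint. For \textbf{preference-compatibility}, I would first note that, for fixed $x_{-i}$, the map $x_i \mapsto Q_i(x)$ is affine and strictly increasing while $U_i$ is strictly concave, so $J_i$ is strictly convex in $x_i$ (indeed $\partial^2 J_i/\partial x_i^2 = -U_i''(Q_i)(C/S_{-i})^2 > 0$); hence the stationarity condition $\partial J_i/\partial x_i = 1 - U_i'(Q_i)\,C/S_{-i} = 0$ is necessary and sufficient for player $i$'s unique best response, and rearranges to exactly $U_i'(Q_i) = S_{-i}/C = P_i(x)$, i.e.\ (\ref{e:incentive1}). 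Combined with the convexity/compactness hypotheses guaranteeing a unique NE $x^*$ (cf.\ \cite{basargame}), this gives $Q_i^* = \arg\min_{Q_i} J_i(Q)$ and $P_i(Q^*) = U_i'(Q_i^*)$ for every $i$, which is the preference-compatibility claim.

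For \textbf{strategy-proofness}, I would use $J_i(x) = x_i - U_i(Q_i(x))$ once more together with the affine dependence $Q_i(x_i^*+\delta, x_{-i}^*) = Q_i^* + (C/S_{-i}^*)\,\delta$, so that a scalar deviation $\delta$ (keeping the bid admissible) perturbs the cost by
$$ J_i(x_i^*+\delta, x_{-i}^*) - J_i(x_i^*, x_{-i}^*) = \delta - \big[\, U_i\big(Q_i^* + (C/S_{-i}^*)\,\delta\big) - U_i(Q_i^*) \,\big]. $$
By concavity of $U_i$ and the first-order identity $U_i'(Q_i^*) = S_{-i}^*/C$ established above --- which is exactly what the hypothesized growth bound encodes once read through the mechanism's bid-to-allocation map --- the bracketed increment is at most $(S_{-i}^*/C)(C/S_{-i}^*)\delta = \delta$, so $J_i(x_i^*+\delta) \ge J_i(x_i^*)$ for all $i$ and all $\delta$: no unilateral departure from the truthful bid is profitable.

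For \textbf{asymptotic efficiency}, the NE conditions read $U_i'(Q_i^*) = P_i^* = \big(\sum_j x_j^* + \omega - x_i^*\big)/C$ for each $i$, while the designer's optimum (\ref{e:designerobj1}) is characterized by (\ref{e:global1}), $U_i'(Q_i) = \l$ for all $i$, together with (\ref{e:constraint1}), $\sum_i Q_i = C$. I would let $N \to \infty$ and argue that (i) the only $i$-dependent part of $P_i^*$, namely $x_i^*\big/\big(\sum_j x_j^* + \omega\big)$, is negligible as the number of bidders grows, so the prices $P_i^*$ converge to a common value $\l$ and (\ref{e:global1}) is recovered; and (ii) $\sum_i Q_i^* = C\sum_i x_i^*\big/\big(\sum_j x_j^* + \omega - x_i^*\big) \to C$ because the fixed reserve bid $\omega$ is eventually swamped by the aggregate bid, recovering (\ref{e:constraint1}); hence $Q^*$ meets the optimality conditions of (\ref{e:designerobj1}) in the limit.

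The genuine difficulty, I expect, is concentrated in this last step: making limits (i)--(ii) rigorous calls for control over how the equilibrium bids $x_i^*$ scale with $N$, for reconciling the two competing demands on $\omega$ (large enough that $\sum_i Q_i \le C$ for all $N$, yet asymptotically small relative to the total bid so utilization tends to full), and then for promoting convergence of the prices $P_i^*$ to convergence of the allocations $Q_i^*$ via continuity and strict monotonicity of each $U_i'$. The first two properties, in contrast, reduce to the single identity $c_i = x_i$, strict concavity of $U_i$, and the assumed uniqueness of the NE.
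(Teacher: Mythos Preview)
Your proof follows the same three-part decomposition as the paper---substitute the mechanism to obtain $J_i(x)=x_i-U_i(Q_i(x))$, apply the first-order condition for preference-compatibility, compute $J_i(x_i^*+\delta)-J_i(x_i^*)$ for strategy-proofness, and compare the NE prices with the Lagrangian optimality conditions for asymptotic efficiency---and reaches the same conclusions by the same calculations. The one substantive difference is in the strategy-proofness step: the paper simply invokes the stated hypothesis $U_i(x_i^*+\delta)-U_i(x_i^*)\le\delta$ to bound the utility increment, whereas you derive that bound directly from concavity of $U_i$ together with the first-order identity $U_i'(Q_i^*)=S_{-i}^*/C$ already established; your route makes transparent that the hypothesis is in fact automatic under the standing strict-concavity assumption, so nothing extra need be imposed. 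Your candid flagging of the informal $N\to\infty$ limits is also apt: the paper's efficiency argument operates at the same heuristic level, merely exhibiting the mismatch between $P_i=(\sum_{j\ne i}x_j+\omega)/C$ and the ideal $\sum_j x_j/C$ and asserting that the approximation sharpens as the number of players grows and $\omega$ is taken accordingly small, without further quantification.
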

\begin{proof}

The mechanism $\mc M^a$ is defined by the bidding process (\ref{e:bid1}), unit prices (\ref{e:price1}), and 
allocation rule (\ref{e:Q1}) for each player $i \in \mc A$. Substituting these into the player cost function (\ref{e:playercost1}) 
results in
$$  J_i (x) = x_i - U_i \left( \dfrac{x_i}{\sum_{j \neq i} x_j + \omega}\, C \right) .$$
Due to the convexity of $J_i$ in $x_i$, the first order necessary condition for optimality is also sufficient:
$$ \dfrac{\partial J_i (x)}{\partial x_i }= 1 -  
\dfrac{\partial U_i (Q_i)}{\partial Q_i} \left( \dfrac{C}{\sum_{j \neq i} x_j + \omega}\right)=0.$$
From definition of $P_i$ in  (\ref{e:price1}) follows $P_i(Q)=U_i^{\prime}(Q_i)$ for each
player $i$. Hence, the mechanism $\mc M^a$ is \textit{preference-compatible}. Furthermore,
it is straightforward to show that this game admits a unique NE, $x^*$ \cite{basargame}.

Assume a player $i$ deviates from its truthful bid $x_i$ by an amount $\delta \in \Real$ such
that $\tilde x_i = x_i + \delta$. Then, the player cost under $\mc M^a$ becomes
$$\tilde J_i= \tilde x_i - U_i \left( \dfrac{\tilde x_i}{\sum_{j \neq i} x_j + \omega}\, C \right) .$$
In order $\mc M^a$ to be \textit{strategy-proof}, 
$$\tilde J_i - J_i = \delta -  \left( U_i (Q_i(x_i+\delta)) - U_i (Q_i(x_i)) \right) >0 ,$$
which immediately holds under the assumption in the theorem.

Although it is preference-compatible and strategy-proof, the mechanism $\mc M^a$ 
is not fully efficient as it does not exactly solve the designer optimization
problem (\ref{e:designerobj1}). To see this, let us solve (\ref{e:global1})
and (\ref{e:constraint1}) using $x_i = P_i Q_i$ to obtain
$$ P_i = \dfrac{\sum_i x_i }{C} \text{ and } Q_i = \dfrac{x_i}{\sum_i x_i }C .$$
These optimal solutions (with respect to designer objective) are  only approximated by the pricing (\ref{e:price1}) 
and allocation  (\ref{e:Q1}) rules. Hence, 
$$  P_i= \dfrac{\sum_{j \neq i} \a_j +\omega}{C} \neq \dfrac{\sum_i \a_i}{C} $$
and
$$ \sum_i Q_i =\sum_i \dfrac{ \a_i}{\sum_{j \neq i} \a_j +\varepsilon} C \approx C. $$
The choice of suboptimal (in the sense of efficiency) rules is due to the fact that $\mc M^a$ has to achieve 
strategy-proofness at the same time as efficiency and preference-compatibility.
However, as the number of players increases, $N\rightarrow \infty$, and by choosing $\omega$ accordingly small, the approximation
becomes more precise. Thus, the mechanism $\mc M^a$ is \textit{asymptotically efficient}.
\end{proof}

\subsection*{Example 2:}

As a special case of the auction-based mechanism $\mc M^a$, consider a setup where,
the player utility functions are logarithmic and respectively weighted by a positive scalar parameter $\a$ such that
$$U_i=\a_i \log Q_i \;\; \forall i \in \mc A. $$ 
Then, the following result holds as a special case of Theorem~\ref{thm:auction1}.

\begin{cor} \label{thm:auction2}
The auction-based mechanism $\mc M^a$ defined by (\ref{e:bid1}), (\ref{e:price1}), and (\ref{e:Q1})
allocates the fixed divisible resource $C$ to a set of selfish rational players $\mc A$ with respective
cost functions (\ref{e:playercost1}) and utilities $U_i=\a_i \log Q_i \;\; \forall i \in \mc A$ in such a way that the 
mechanism is preference-compatible, strategy-proof, and asymptotically efficient.
\end{cor}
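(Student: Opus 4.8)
The plan is to derive this result as a direct specialization of Theorem~\ref{thm:auction1}: since $\mc M^a$ is exactly the mechanism treated there, it only remains to check that logarithmic utilities satisfy the hypothesis $U_i(x_i^*+\delta)-U_i(x_i^*)\le\delta$, and then preference-compatibility, strategy-proofness, and asymptotic efficiency are inherited verbatim. Concretely, as in the proof of Theorem~\ref{thm:auction1}, the operative form of that hypothesis is the bound $U_i\!\left(Q_i(x_i^*+\delta)\right)-U_i\!\left(Q_i(x_i^*)\right)\le\delta$ on the utility change induced by a unilateral deviation $\delta$ away from the truthful NE bid, so this is the single inequality I would establish for $U_i=\a_i\log Q_i$.

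First I would pin down the NE bid $x_i^*$ explicitly. Preference-compatibility (\ref{e:incentive1}) gives $P_i(Q^*)=U_i'(Q_i^*)$, and for $U_i=\a_i\log Q_i$ we have $U_i'(Q_i)=\a_i/Q_i$. Combining this with the bidding relation $x_i=P_i(x)\,Q_i(x)$ from (\ref{e:bid1}) yields $x_i^*=P_i(Q^*)Q_i^*=(\a_i/Q_i^*)Q_i^*=\a_i$; that is, at the (unique) NE each player bids exactly its own utility weight. Next I would substitute into the allocation rule (\ref{e:Q1}): because a deviation $\delta$ perturbs only the numerator of $Q_i$, the denominator $\sum_{j\neq i}x_j^*+\omega$ cancels in the difference, giving
$$U_i\!\left(Q_i(x_i^*+\delta)\right)-U_i\!\left(Q_i(x_i^*)\right)=\a_i\log\frac{x_i^*+\delta}{x_i^*}=\a_i\log\!\left(1+\frac{\delta}{\a_i}\right).$$
Applying the elementary inequality $\log(1+t)\le t$ with $t=\delta/\a_i$ then yields $\a_i\log(1+\delta/\a_i)\le\a_i\cdot(\delta/\a_i)=\delta$, which is precisely the hypothesis required by Theorem~\ref{thm:auction1}. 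The bound holds for every admissible deviation, i.e. every $\delta>-\a_i$, which is exactly the range keeping $Q_i(x_i^*+\delta)$ positive so that the logarithmic utility is well defined. Invoking Theorem~\ref{thm:auction1} finishes the argument, with asymptotic efficiency again following as $N\to\infty$ for $\omega$ chosen small.

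I expect no substantive obstacle here — the content is just the single line $\a_i\log(1+\delta/\a_i)\le\delta$ — only two points of care. One must (i) use the hypothesis of Theorem~\ref{thm:auction1} in its operative form, composed with the allocation map $Q_i(\cdot)$, rather than applying $U_i$ directly to the bid, and (ii) note that the identification $x_i^*=\a_i$ is what makes the inequality tight: with $x_i^*$ strictly larger than $\a_i$ the bound would in fact fail for negative $\delta$, so the computation of the NE bid is not a mere formality. Once these are in place, the statement is immediate from Theorem~\ref{thm:auction1}.
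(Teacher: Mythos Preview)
Your proposal is correct and follows essentially the same approach as the paper: both reduce to verifying the hypothesis of Theorem~\ref{thm:auction1} for logarithmic utilities, identify the truthful NE bid as $x_i^*=\a_i$, and arrive at the single inequality $\a_i\log(1+\delta/\a_i)\le\delta$ (equivalently $1+\delta/\a_i\le e^{\delta/\a_i}$). Your treatment is slightly more careful in making explicit that the operative condition involves $U_i$ composed with $Q_i(\cdot)$ and in delimiting the admissible range $\delta>-\a_i$, but there is no substantive difference.
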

\begin{proof}
The proofs of preference-compatibility and asymptotic efficiency follow directly from the ones of
Theorem~\ref{thm:auction1}. Furthermore, the mechanism is strategy-proof under logarithmic
player utilities since they satisfy the sufficient condition in Theorem~\ref{thm:auction1}.
The condition in this case is
$$ \a_i \log (Q_i(x_i+\delta) ) - \a_i \log (Q_i(x_i)) \leq \delta, $$
leading to
$$ \log \left(1+\frac{\delta}{x_i}\right) \leq \dfrac{\delta}{\a_i}. $$
The player's truthful bid is $x_i=\a_i$ from its cost function (\ref{e:playercost1}). Thus, 
we obtain
$$ \exp\left(  \dfrac{ \delta}{\a_i }\right)  > 1+ \dfrac{\delta}{\a_i},$$
which holds by definition, and completes the proof.
\end{proof}

\subsection{Auctions for Non-separable Utilities} \label{sec:auctionnonsep}

In many problem formulations, the player utilities are non-separable, i.e. they depend also on other
player's actions. This is the case, for example, in interference coupled systems such as a cellular wireless
system with a base station (acting as the designer) and mobile devices or users as players who bid to achieve
a certain QoS level. Let $x_i$ denote the bid of a mobile device and the $q_i(x)$ the transmission power
assigned to it by the base station. Then, the 
signal-to-interference ratio (SIR) of the received signal by the mobile is
\begin{equation} \label{e:sir}
  \g_i =\dfrac{q_i(x)}{\sum_{j \neq i} q_j(x) + \sigma} ,
\end{equation}
where $\sigma>0$ is an independent noise term. Notice that this is essentially a centralized
scheme similar to the ones currently deployed. A decentralized version will be discussed
in Section~\ref{sec:pricing}.

This interference management and power control formulation has been discussed extensively 
in the literature, e.g. \cite{bocheSchubert_ITNet2008,yates1995,alpcan-winet2}. However,
such mechanisms do not necessarily need to be limited to wireless
networks and apply to any system with linear interference coupling~\cite{disc09}
under the assumption that the player utilities are $U_i(\g_i)$ continuous,
strictly concave, and twice differentiable in their arguments $\g_i$ (\ref{e:sir}).

\subsection*{Example 3}

Consider an auction-based mechanism for an interference-coupled system
where players have non-separable and logarithmic utilities and a linear pricing scheme, which
make the problem more tractable. Then, each player $i$ minimizes its respective cost 
\begin{equation} \label{e:playercost4}
 J_i(x)= P_i(x) q_i(x) - \a_i \log (\g_i(q(x))) ,
\end{equation}
which is strictly convex in player power level $q_i$. Consequently, the general condition
for player preference-compatibility is $ P_i = \a_i / q_i, \; \forall i \in \mc A$, 
as in Examples 1 and 2. 

The global objective of the designer is to maximize sum of utilities of players while trying
to limit the total interference effect to an upper-bound $C$. This approximate formulation 
is motivated by, for example, limiting the aggregate inter-cell interference created by the mobile devices 
in a wireless network, where base stations have no means of communication among themselves.
Hence, the designer $\D$ solves the constrained convex optimization problem
$$ \max_q V(q) \Leftrightarrow \max_q\sum_i \a_i \log (\g_i(q)) \text{ such that } \sum_i q_i \leq C .$$

The resulting necessary and sufficient conditions for optimality are
$$ \dfrac{\a_i}{q_i} - \sum_{j \neq i} \dfrac{\a_j}{\bar C - q_j}= \l \text{ and } \sum_i q_i = C,$$
where $\bar C=C+\sigma$.

In the specific resource sharing setting defined, an auction-based mechanism, $\mc M^b$,
is defined based on the bid of player $i$,
\begin{equation} \label{e:bid2}
 x_i :=P_i(x) Q_i(x).
\end{equation}
and the allocation rule 
\begin{equation} \label{e:Q2}
 Q_i:= \dfrac{x_i}{P_i(x)}=q_i(x) ,
\end{equation}
which assigns users power levels based on their bids and computed prices.

Under the preference-compatibility condition, the bids
have to match the utility parameter, $x_i=\a_i$. Then,
the optimality conditions for the global problem become
\begin{equation} \label{e:globalaucnonsep1}
\dfrac{x_i}{q_i} - \sum_{j \neq i} \dfrac{x_j}{\bar C - q_j}= \l \text{ and } \sum_i q_i = C.
\end{equation}
which are solved to obtain $(q^*, \l^*)$. 
Accordingly, the pricing function is
\begin{equation} \label{e:price2}
 P_i(x):=\l^* +   \sum_{j \neq i} \dfrac{x_j}{\bar C - q_j^*}.
\end{equation}
As a result of this design, the auction-based mechanism $\mc M^b$ is clearly
efficient and preference-compatible.

We next show that  mechanism $\mc M^b$ is asymptotically strategy-proof. Assume that 
a player $i$ deviates from its truthful bid $x_i$ by an amount $\delta \in \Real$ such
that $\tilde x_i = x_i + \delta$. The strategy-proofness is then equivalent to
$$ \tilde J - J = \delta - \a_i \log \left( \dfrac{\g_i (x_i +\delta)}{\g_i (x_i)}\right)  >0.$$
As in the previous example, this leads to
$$  \dfrac{\g_i (x_i +\delta)}{\g_i (x_i)} < \exp (\dfrac{\delta}{\a_i}), $$
or
$$  \dfrac{\tilde q_i }{q_i} \dfrac{\bar C - q_i}{\bar C -\tilde q_i} \dfrac{\l}{\tilde \l}
 < \exp (\dfrac{\delta}{\a_i}), $$
where $\tilde \l$ is the solution of (\ref{e:globalaucnonsep1}) under $\tilde x_i$.
Note that, as the number of players goes to infinity,\footnote{We remind here the
underlying assumption that each player acts individually and there is no coordination among players.
This assumption is applicable to many networked systems with information flow constraints. }
we have
$$ \lim_{N \rightarrow \infty} \dfrac{\bar C - q_i}{\bar C -\tilde q_i} \dfrac{\l}{\tilde \l} =1.$$
Thus, it asymptotically holds that
$$ 1 + \dfrac{\delta}{\a_i}< \exp (\dfrac{\delta}{\a_i}),$$
which establishes the result summarized in the following theorem.

\begin{thm} \label{thm:auction3}
Consider a set selfish rational players $\mc A$ with respective cost functions (\ref{e:playercost4}) and non-separable utilities $U_i=\a_i \log(\gamma(q(x))) \;\; \forall i \in \mc A$ in an interference-coupled
system (\ref{e:sir}). 
The auction-based mechanism $\mc M^b$ defined by (\ref{e:bid2}), (\ref{e:Q2}), and  (\ref{e:price2})
maximizes the sum of utilities of players while limiting the total interference effect to 
an upper-bound $C$ in such a way that the 
mechanism is preference-compatible, efficient, and asymptotically strategy-proof.
\end{thm}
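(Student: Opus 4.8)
The plan is to establish the three conclusions of the theorem in the order preference-compatibility, efficiency, and asymptotic strategy-proofness, with the first two following essentially by construction and the last requiring a genuine limiting argument. For \textbf{preference-compatibility} I would substitute the bid rule (\ref{e:bid2}), the allocation rule (\ref{e:Q2}), and the pricing (\ref{e:price2}) into the player cost (\ref{e:playercost4}), obtaining $J_i(x)=x_i-\a_i\log\g_i(q(x))$ with $q_i=x_i/P_i(x)$; since $\g_i$ has the linear-interference form (\ref{e:sir}) and $\log\g_i$ is strictly concave in $q_i$ on the feasible set, $J_i$ is strictly convex in $q_i$, so the stationarity condition $\partial J_i/\partial q_i=0$ is necessary and sufficient and yields $P_i=\a_i/q_i$, hence the truthful bid $x_i=P_iq_i=\a_i$. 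A diagonal-dominance/uniqueness argument in the spirit of \cite{basargame} then gives a unique NE $q^*$ at which every player's own optimization problem is solved, which is exactly preference-compatibility.

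For \textbf{efficiency} I would observe that the pricing (\ref{e:price2}) is reverse-engineered from the first-order (Lagrange) conditions (\ref{e:globalaucnonsep1}) of the designer $\D$'s convex program $\max_q\sum_i\a_i\log\g_i(q)$ subject to $\sum_i q_i\le C$: substituting the truthful bids $x_i=\a_i$ into (\ref{e:globalaucnonsep1}) and solving for $(q^*,\l^*)$, the NE stationarity $P_i=\a_i/q_i$ combined with (\ref{e:price2}) reproduces exactly $\a_i/q_i-\sum_{j\neq i}\a_j/(\bar C-q_j)=\l^*$ together with $\sum_i q_i=C$, so the NE allocation coincides with the designer's optimizer and $\mc M^b$ is efficient. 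Here I would also check that the global program is feasible and that (\ref{e:globalaucnonsep1}) has a unique solution with $\l^*>0$, which follows from strict concavity of the objective, convexity of the constraint set, and activeness of the interference bound.

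The remaining and hardest part is \textbf{asymptotic strategy-proofness}. Let player $i$ unilaterally bid $\tilde x_i=x_i+\delta$ while the others remain truthful; the mechanism recomputes $(\tilde q,\tilde\l)$ from (\ref{e:globalaucnonsep1}) with $x_i$ replaced by $\tilde x_i$, and the induced cost change is $\tilde J_i-J_i=\delta-\a_i\log\bigl(\g_i(\tilde x_i)/\g_i(x_i)\bigr)$, so non-profitability of the deviation is equivalent to $\g_i(\tilde x_i)/\g_i(x_i)<\exp(\delta/\a_i)$, i.e.\ to $\dfrac{\tilde q_i}{q_i}\cdot\dfrac{\bar C-q_i}{\bar C-\tilde q_i}\cdot\dfrac{\l}{\tilde\l}<\exp(\delta/\a_i)$ after writing out $\g_i$ and using preference-compatibility. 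I expect the main obstacle to be the claim $\lim_{N\to\infty}\dfrac{\bar C-q_i}{\bar C-\tilde q_i}\cdot\dfrac{\l}{\tilde\l}=1$, i.e.\ that one bidder's perturbation has asymptotically negligible effect on the aggregate primal--dual quantities $\bar C-q_j$ and $\l$: making this rigorous calls for an implicit-function/sensitivity analysis of the solution map of the coupled system (\ref{e:globalaucnonsep1}), showing under a uniform bound on the utility weights $\a_j$ that the Jacobian of (\ref{e:globalaucnonsep1}) is uniformly nonsingular in $N$ by a diagonal-dominance estimate, hence $\tilde q_j-q_j=O(1/N)$ for $j\neq i$ and $\tilde\l-\l=O(1/N)$, while $q_i,\tilde q_i=O(1/N)$ because $\sum_j q_j=C$; feeding these estimates back then gives $\tilde q_i/q_i\to 1+\delta/\a_i$ since $P_i$ stabilizes and $x_i=\a_i$. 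Granting the limit, the inequality collapses to $1+\delta/\a_i<\exp(\delta/\a_i)$, which holds for all $\delta\neq 0$ by strict convexity of $\exp$ and is an equality at $\delta=0$; this establishes asymptotic strategy-proofness and completes the proof.
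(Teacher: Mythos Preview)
Your proposal is correct and follows essentially the same approach as the paper: preference-compatibility via $P_i=\a_i/q_i$ giving truthful bids $x_i=\a_i$, efficiency by construction since (\ref{e:price2}) is reverse-engineered from the Lagrange conditions (\ref{e:globalaucnonsep1}), and asymptotic strategy-proofness by reducing $\tilde J_i-J_i>0$ to the same ratio inequality and invoking the limit $\lim_{N\to\infty}\frac{\bar C-q_i}{\bar C-\tilde q_i}\frac{\l}{\tilde\l}=1$ to collapse it to $1+\delta/\a_i<\exp(\delta/\a_i)$. If anything you are more careful than the paper, which simply asserts the limit (with a footnote about independent players) rather than sketching the implicit-function/sensitivity argument you outline.
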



%
%
%

\section{Pricing Mechanisms} \label{sec:pricing}

Pricing mechanisms differ from auction-based ones by the property that the designer does
not allocate the resources explicitly, i.e. there is no allocation rule $Q$. 
\textit{The players obtain resources directly as a result of their actions} 
but are charged for them by the designer observing these actions (Figure~\ref{fig:pricingmech1}).
Hence, the designer has relatively less leverage in this case compared to auctions.

Pricing mechanisms are applicable to many networked systems where an explicit allocation of resources brings a prohibitively expensive
overhead or simply not feasible, e.g. due to participating players being selfish or located in a distributed manner. 
Example problems include rate control in wired networks, interference management in wireless networks,
and power control in optical networks \cite{tac06-pavel,alpcan-twc,alpcan-ton,srikantbook}.

\begin{figure}[htp]
  \centering
  \includegraphics[width=\columnwidth]{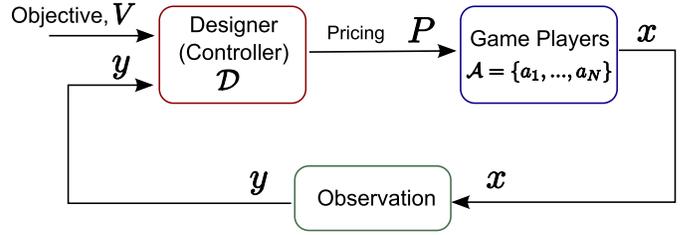}
  \caption{The block diagram of a generic pricing mechanism. The designer sets the prices $P$ to achieve
a global objective $V$ based on the observations $y$ of player actions $x$. The players choose their
actions $x$ independently according to their utilities $U$ (preferences) and prices $P$. The overall
mechanism aims to ensure efficiency, preference-compatibility, and strategy-proofness.}
\label{fig:pricingmech1}
\end{figure}

\subsection{Pricing Mechanisms for Separable Utilities}

We study an additive resource sharing scenario, where
the players compete for a fixed divisible resource $C$ as in Section~\ref{sec:auction}. 
The players' individual cost functions, which they minimize, have the general form
\begin{equation} \label{e:playercost2}
 J_i(x)= P_i(x) x_i  - U_i (x_i).
\end{equation}
Here, $x_i$  denotes the player's action of obtaining that specific amount of the resource directly,
in contrast to bidding for it and receiving an allocation from the designer.
It is sufficient for the purposes of the designer to choose a pricing function linear in $x_i$. A more general form
of pricing is provided in (\ref{e:genformofprice}). The player utility function $U_i$ is assumed
to be continuous, strictly concave, and twice differentiable. At the same time it only takes the player's own
action as its argument, i.e. the \textit{player utilities are separable} in this formulation.

In order for a pricing mechanism to be \textit{preference-compatible}, it has to satisfy
$$ P_i(x^*)=U_i^{\prime}(x_i^*), \;\; \forall i \in \mc A, $$
which directly follows from (\ref{e:playercost2}). The point $x^*$ is, by definition,
the Nash equilibrium solution of of the strategic game, where no player has an incentive to deviate from
it. Under the assumptions made for player utilities, the game admits a unique Nash equilibrium
solution \cite{basargame}. 
It is important to note that, if there was no pricing term in (\ref{e:playercost2}), each player would
try to get a large proportion of the resource resulting in a suboptimal result for everyone;
a situation sometime termed as  \textit{tragedy of commons}. The designer can prevent this by
a carefully selected pricing scheme \cite{cdc09lacra,gamenetsne}.

The global objective of the designer can be maximization of the sum of player utilities while
ensuring full resource usage, i.e. $\sum_i x_i=C$. Hence, the designer $\D$ solves
the counterpart of the constrained optimization problem in (\ref{e:designerobj1}) along
with (\ref{e:global1}) and (\ref{e:constraint1}). 

When the two criteria of preference-compatibility and efficiency (designer objectives) are combined,
the pricing function $P_i$ of a player $i$ has to satisfy
$$  P_i(x^*)=U_i^{\prime}(x_i^*)= \l, \; \forall i \in \mc A, $$
where $\l>0$ is the unique Lagrange multiplier. From the criterion of full resource usage,
it follows that
\begin{equation} \label{e:m2}
  \sum_i  x_i^* = \sum_i \left( U_i^{\prime}\right)^{-1}(\l) =C.
\end{equation}
Define  $\l^*$ as the optimal solution to (\ref{e:m2}) given player utilities $U_i$ and capacity $C$.
Then, the optimal pricing function is: $ P_i=\l^* \; \forall i$. 

If the designer wants to compute the unit prices $P$ directly by solving (\ref{e:m2}), it needs to ask the individual players
for their utilities. However, the players have an incentive to misrepresent their utilities to gain a larger
share of resources, if they are asked directly by the designer. 
Such a direct mechanism has two significant disadvantages. First, the designer has to have additional schemes in place
to detect potential player misbehavior (for which players have an incentive). Second, 
it brings another layer of communication overhead to the system. The disadvantages
of such direct mechanism will be illustrated more concretely in the scope of an example in the next subsection.

Alternatively, one can design an \textbf{iterative pricing mechanism} that is based
on observation of player actions $x$ instead of asking for their word (utilities). Then,
the designer deploys this iterative mechanism 
to compute the optimal prices $P_i=\l^*$ as a solution to (\ref{e:m2}).

For example, consider the following iterative pricing mechanism
\begin{equation} \label{e:iterative1a}
 \l (n+1) = \l (n) + \kappa \left( \sum_i x_i - C \right) ,
\end{equation}
where $\kappa >0$ is a small step size, $\l>0$, and
\begin{equation} \label{e:iterative1b}
 x_i (n+1) = \phi x_i (n) + (1-\phi)  \left( \dfrac{\partial U_i}{\partial x_i} \right)^{-1}(\l), \; \;\; \forall i \in \mc A,
\end{equation}
where $0<\phi<1$. Here, $n\geq 1$ denotes the time (update) step. Note that, the players
adopt a relaxed or gradient update scheme instead of best response taking into account variability of
the system. The gradient update also helps with convergence. 


\subsection*{Example 4:}

As a special case, let the utility function of players be logarithmic and weighted by parameter $\a$ such that
$$U_i(x_i)=\a_i \log x_i$$ 
for player $i$. Such utility functions have been utilized in the literature, for example, 
to model user demand in rate or congestion control on networks. The solution aligning the player and designer
objectives, in other words  the efficient Nash equilibrium, has the following properties:
$$ P_i^*=\dfrac{\a_i}{x_i}=\l ; \quad x_i=\dfrac{\a_i}{\l} \;\; \forall i \in \mc A$$
$$\Rightarrow \sum_i x_i=\dfrac{\sum_i \a_i}{\l}=C ; \quad \l = \dfrac{\sum_i \a_i}{C}. $$
Hence, the resulting optimal pricing mechanism for all players is 
\begin{equation} \label{e:exp2}
P= \dfrac{\sum_i \a_i}{C} . 
\end{equation}

Although this solution is preference-compatible from the players' perspective and
solves the global optimization problem of the designer, it is not strategy proof if the
designer explicitly asks the players for their utility parameter $\a$. To see this,
assume that player $i$ has a true utility parameter $\a_i$ but misrepresents
it to the designer as $\tilde \a_i=\a_i + \delta$ for some $\delta \in \Real$. 
Then, the new price is  $\tilde P= (\sum_i \a_i + \delta)/C$ and player $i$
real cost becomes
$$\tilde J_i(\tilde x_i, x_{-i})= \tilde P \tilde x_i - \a_i \log (\tilde x_i)$$
instead of 
$$ J_i(x) =  P  x_i - \a_i \log (x_i).$$
Substituting $\tilde P$ and computing $\tilde x_i$ yields
$$ \tilde J_i(\tilde x_i, x_{-i})= \a_i - \a_i \log \left( \dfrac{\a_i}{\sum_i \a_i + \delta} C \right),$$
and similarly we have
$$ J_i (x)= \a_i - \a_i \log \left( \dfrac{\a_i}{\sum_i \a_i } C \right).$$
Clearly, the player $i$ can decrease its cost ($\tilde J_i < J_i$) by choosing a $\delta <0$
despite being charged the same total price. Thus, the mechanism is not strategy-proof.

This issue is remedied by adopting the proposed iterative pricing mechanism:
\begin{equation} \label{e:iterative2a}
 \l (n+1) = \l (n) + \kappa \left( \sum_i x_i - C \right) ,
\end{equation}
\begin{equation} \label{e:iterative2b}
 x_i (n+1) = \phi x_i (n) + (1-\phi)  \dfrac{\a_i}{\l} \;\; \forall i \in \mc A.
\end{equation}
The unique (Nash) equilibrium solution of this iterative algorithm, $(x^*, \l^*)$
solves the designer problem (\ref{e:designerobj1}). Furthermore, since the players
adopt here a relaxed (gradient) best response at each step and there is no
explicit communication between the players and the designer, the scheme
is strategy-proof. To see this, assume otherwise and let player $i$ ``misrepresent''
its actions $\tilde x_i = x_i + \delta$ for some $\delta \in \Real$. Then, the player's
instantaneous cost is $\tilde J_i > J_i$ at each step of the iteration. Hence, the
players have no incentive to ``cheat''.

The communication requirements of the algorithm (\ref{e:iterative2a})-(\ref{e:iterative2b}) 
are minimal and suitable for a distributed implementation in a networking environment. 
The designer only needs to observe the total amount $y=\sum_i x_i$ and communicate
the common price $P$ back to the players (see Figure~\ref{fig:pricingmech1}
for a visualization).

Now, a basic stability analysis is provided for the following continuous-time approximation of
the iterative pricing mechanism 
$$ \dot \l = \dfrac{d \l}{dt}= \kappa \left( \sum_i x_i - C \right) , $$
$$ \dot x_i =- \dfrac{\partial J_i}{\partial x_i }= 
\bar \kappa_i \left(\dfrac{\a_i}{x_i} - \l \right), $$
where $t$ denotes time and $\bar \kappa_i>0$ is a user-specific step size. As in the discrete-time
version, the players adopt 
here a gradient best response algorithm. Define the Lyapunov function
$$ V_L:= \frac{1}{2} \left(\sum_i x_i -C\right)^2 +\frac{1}{2} \sum_i  \left(\dfrac{\a_i}{x_i} - \l \right)^2 ,$$
which is nonnegative and satisfies $V_L (x^*,\l^*)=0$. It is straightforward to show
through algebraic manipulations that $\dot V_L (x,\l)<0$ for all $(x,\l) \neq (x^*,\l^*)$. 
Hence, the continuous-time algorithm is globally asymptotically stable \cite{khalilbook}.
This result is a strong indicator of convergence \cite{bertsekas3} of the discrete-time iterative pricing mechanism (\ref{e:iterative2a})-(\ref{e:iterative2b}).

\subsection{Pricing Mechanisms for Non-separable Utilities} \label{sec:pricenonsep}

In some problem formulations, such as interference coupled systems consisting of a base station (acting as the designer) and mobile devices as players, the players' actions are beyond the control of the
base station. Let, specifically, $x_i = h_i p_i$ denote the received power level as a product
of uplink transmission power $p_i$ and channel loss $0<h_i<1$ of player $i$. If linear interference is assumed, then the signal-to-interference ratio (SIR) of the received signal is
\begin{equation} \label{e:sir2}
  \g_i =\dfrac{x_i}{\sum_{j \neq i} x_j + \sigma} ,
\end{equation}
as in (\ref{e:sir}). 

In pricing mechanism, similar to the auction in Section~\ref{sec:auctionnonsep}, each player $i$ minimizes its respective cost 
\begin{equation} \label{e:playercost3}
 J_i(x)= P_i(x) x_i - \a_i \log (\g_i(x)) ,
\end{equation}
which is strictly convex in $x_i$. Consequently, the general condition
for player preference-compatibility is  $ P_i = \a_i / x_i, \; \forall i \in \mc A$.

The global objective of the designer aims to maximize sum of utilities of players while trying
to limit the total interference effect to $C$, motivated by e.g. limiting the
aggregate interference created by the mobile devices in a wireless network.
Hence, the designer $\D$ solves
$$ \max_x V(x) \Leftrightarrow \max_x \sum_i \a_i \log (\g_i(x)) \text{ such that } \sum_i x_i \leq C .$$
This problem differs from the one in Section~\ref{sec:auctionnonsep} as it is non-convex.
However, it can be convexified using the nonlinear
transform $x_i=e^{s_i}$, and then admits a unique solution \cite{bocheSchubert_ITNet2008}.

The resulting necessary and sufficient conditions for optimality are
$$ \dfrac{\a_i}{x_i} - \sum_{j \neq i} \dfrac{\a_j}{I_j}= \l \text{ and } \sum_i x_i = C,$$
where $I_i:=\sum_{j \neq i} x_j + \sigma$ is the interference affecting player $i$.
Hence, aligning the player and designer optimization problems leads to
$$ P_i =  \l + \sum_{j \neq i} \dfrac{\a_j}{I_j}.$$
Using the definition of $\g$, this can be rewritten as
$$ P_i =  \l + \sum_{j \neq i} P_j \g_j $$
or in matrix form
$$ A \cdot P = \mathbf 1 \, \l ,$$
where
\begin{equation} \label{e:amatrix}
   A:=
   \begin{pmatrix}
   1 & -\g_{2} & \cdots & -\g_{N }\\
  -\g_{1} & 1 &   \cdots & -\g_{N } \\
   \vdots &   & \ddots & \vdots \\
    -\g_{1} & -\g_{2} &   \cdots & 1\\
 \end{pmatrix},
 \end{equation}
and $\mathbf 1=[1, \ldots, 1]^T$. Note that the matrix
$A$ is clearly full rank, and hence invertible.

As in Example~4, we define now an iterative pricing mechanism $\mc M^p$ such that
\begin{eqnarray} \label{e:iterative3a}
 \l (n+1) = \l (n) + \kappa_D \big( \sum_i x_i - C \big) , \\ 
  P(n+1) = (A)^{-1} \mathbf 1 \, \l(n),
\end{eqnarray}
and
\begin{equation} \label{e:iterative3b}
 x_i (n+1) =  x_i(n) - \kappa_i \dfrac{\partial J_i}{\partial x_i }=U_i^{\prime}(\g_i(n)) - P_i(n) \;\; \forall i \in \mc A,
\end{equation}
where the players adopt a gradient best response for convergence
purposes. Here, $\kappa_D$ and $\kappa_i$ denote the step sizes of the designer and player $i$, respectively.
Based on the analysis above, the mechanism is preference-compatible and
efficient. Since the players have no incentive to deviate from their (gradient) best responses,
it is also inherently strategy-proof as discussed in Example 4. This result is summarized in the following theorem.

\begin{thm} \label{thm:pricing1}
The unique equilibrium outcome of the pricing mechanism $\mc M^p$ defined by (\ref{e:iterative3a})-(\ref{e:iterative3b})
is preference-compatible, strategy-proof, and efficient.
\end{thm}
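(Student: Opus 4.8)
The statement has three parts — preference-compatibility, strategy-proofness, and efficiency — and the plan is to verify each one in turn, leaning heavily on the constructions already established earlier in Section \ref{sec:pricing}, so that the proof is mostly an assembly of pieces rather than a fresh argument.

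First I would dispatch preference-compatibility. By construction the player update \eqref{e:iterative3b} is the gradient best response to the cost \eqref{e:playercost3}, so any fixed point $x^*$ satisfies $\partial J_i / \partial x_i = 0$, i.e. $P_i(x^*) = \a_i / x_i^* = U_i'(\g_i^*)$ for all $i \in \mc A$. Since \eqref{e:playercost3} is strictly convex in $x_i$, this first-order condition is necessary and sufficient, and the point is exactly the (unique) Nash equilibrium of the induced strategic game, which is the content of preference-compatibility in Table~\ref{tbl:mechdesign}.

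Next, efficiency. Here I would show that the equilibrium $(x^*, P^*, \l^*)$ of \eqref{e:iterative3a}--\eqref{e:iterative3b} coincides with the solution of the designer's constrained problem $\max_x \sum_i \a_i \log \g_i(x)$ subject to $\sum_i x_i \le C$. The price update forces $A P^* = \mathbf 1 \, \l^*$, which (unwinding the definition of $A$ in \eqref{e:amatrix} and of $\g$) is precisely $P_i^* = \l^* + \sum_{j \ne i} \a_j / I_j$; the designer update \eqref{e:iterative3a} has a fixed point only when $\sum_i x_i^* = C$. Together with the preference-compatibility relation $P_i^* = \a_i / x_i^*$, these reproduce the KKT conditions $\a_i / x_i - \sum_{j \ne i} \a_j / I_j = \l$ and $\sum_i x_i = C$ derived earlier. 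Invertibility of $A$ (already noted) guarantees $P^*$ is well-defined given $\l^*$, and the logarithmic transform $x_i = e^{s_i}$ convexifies the problem \cite{bocheSchubert_ITNet2008}, so the KKT point is the unique global optimum; hence the equilibrium is efficient.

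Finally, strategy-proofness, which I expect to be the main obstacle and which I would argue exactly as in Example~4. The players never report utilities to the designer — the designer observes only the aggregate $\sum_i x_i$ and broadcasts the common ingredients of $P$ — so there is no ``word'' to distort; the only conceivable deviation is to play something other than the gradient best response \eqref{e:iterative3b}. But at every step $n$ the update is, by definition, the direction that strictly decreases the player's instantaneous cost $J_i$ from \eqref{e:playercost3} (strict convexity in $x_i$ makes the minimizer unique), so any alternative action $\tilde x_i(n) = x_i(n) + \delta$ with $\delta \ne 0$ yields $\tilde J_i(n) > J_i(n)$. Thus no player gains by ``misrepresenting'' its action, which is the strategy-proofness criterion in the model. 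I would close by remarking — paralleling the Lyapunov argument after Example~4 — that the gradient structure of \eqref{e:iterative3a}--\eqref{e:iterative3b} together with invertibility of $A$ makes this unique equilibrium the limit of the iteration, so the ``equilibrium outcome'' referred to in the theorem is genuinely attained. The delicate point is making precise in what sense honesty is dominant here: the claim is not a classical dominant-strategy statement over a message space but the weaker assertion that, given the information structure (no direct elicitation) and the prescribed gradient dynamics, a player's best per-step response is its true one — the same subtlety flagged for Example~4.
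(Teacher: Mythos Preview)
Your proposal is correct and matches the paper's approach essentially line for line: the paper does not give a separate proof block for this theorem but instead states it as a summary of the preceding analysis (the KKT alignment yielding efficiency and preference-compatibility, and the Example~4 argument for strategy-proofness via gradient best responses with no direct utility elicitation), and you have faithfully reconstructed exactly that chain. Your added remarks on the Lyapunov/convergence side and on the non-classical sense of strategy-proofness go slightly beyond what the paper spells out here, but they are consistent with the surrounding discussion and do not depart from the paper's method.
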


The implementation of mechanism $\mc M^p$ requires minimum information overhead. The designer
only needs to observe the aggregate received power level $ \sum_i x_i $ and the individual SIRs, $\g$,
of players both of which are already available. The player $i$, in return only needs to know the current
price $P_i$ and SIR $\g_i$ to be able to compute the (gradient) best response (see Figure~\ref{fig:pricingmech1}
for visualization). Finally, the computation of actual uplink power levels $p$ can be computed from $x$
using the measured channel gains.

\subsection*{Example 5:}

The iterative pricing mechanism $\mc M^p$ 
is illustrated with a numerical example. $10$ players with the utility parameters
$$\a= [0.23 \, 1.33 \,   0.73 \, 0.28 \,  1.13 \, 1.65 \,  1.35 \,  2.00 \, 1.92 \,  0.12] ,$$
update their power levels according to (\ref{e:iterative3b}) at each time step $n \geq 1$ with a 
stepsize of $\kappa_i=0.05 \; \forall i$. The designer, on the other hand, updates the Lagrangian multipler
$\l$ and prices $P$ based on (\ref{e:iterative3a}), where $C=5$ and $\kappa_D=0.01$. The background
noise parameter in (\ref{e:sir2}) is $\sigma=0.5$. The convergence of the mechanism $\mc M^p$ 
summarized in Algorithm~\ref{alg:iterative} is depicted
in Figures~\ref{fig:xlevels1} and \ref{fig:lambda1}.

\begin{figure}[htp]
  \centering
  \includegraphics[width=0.9\columnwidth]{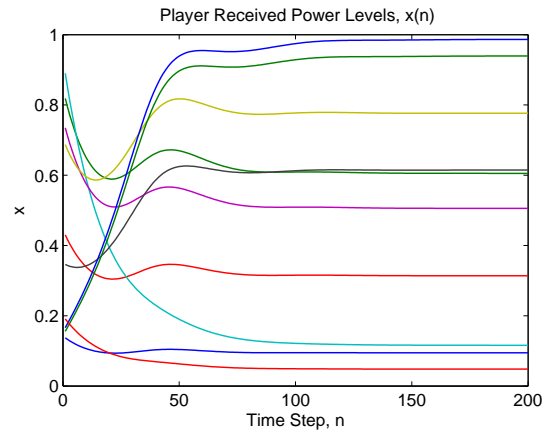}
  \caption{The evolution of user power levels, $x$, which are updated by players (\ref{e:iterative3b}) under mechanism $\mc M^p$. }
\label{fig:xlevels1}
\end{figure}
\begin{figure}[htbp]
  \centering
  \includegraphics[width=0.9\columnwidth]{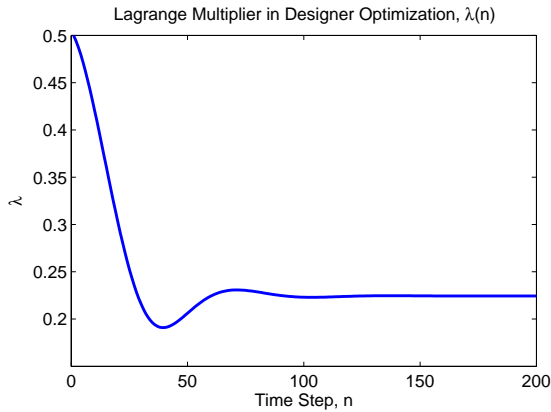}
  \caption{The evolution of Lagrange multipler, $\l$, used in computing player prices in (\ref{e:iterative3a}).}
\label{fig:lambda1}
\end{figure}

\begin{algorithm}[htp]
  \SetAlgoLined
  \KwIn{\textit{Designer (base station)}: Interference target $C$ and objective $\sum_i U_i$}
  \KwIn{\textit{Players (users)}: Utilities $U_i=\a_i \log (\g_i(x)), \forall i$}
  \KwResult{Power levels $x$ and  SIRs $\g(x)$}
  
    Initial power levels $x(0)$ and  prices $P_i(0)$ \;
    \Repeat{end of iteration}{
   \Begin(\textit{Designer:}){
    Observe player power levels $x$ \;
    Compute the matrix matrix $A$ in (\ref{e:amatrix}) \;
    Update $\l$ and prices $P$ according to (\ref{e:iterative3b}) \;
    }
   \Begin(\textit{Players:}){
    \ForEach{player $i$}{
      Estimate marginal utility $\partial U_i(x)/ \partial x_i$ \;
      Compute power level $x_i$ from (\ref{e:iterative3a}) \;
    }  
   }
    }
  \caption{Iterative Pricing Mechanism  $\mc M^p$ } \label{alg:iterative}
\end{algorithm}

\newpage
\section{Discussion and Literature Review} \label{sec:background}

There is a rich literature on Mechanism design both in the field of economics \cite{maskin1} and  recently in engineering~\cite{johari1,hajek1,lazarSemret1998,rajiv1,wuWangLiuClancy2009,huangBerryHonig2006b}.
The auction-based mechanism framework presented in Section~\ref{sec:auction} is based in principle on
progressive second price (PSP) auctions~\cite{lazarSemret1998,huangBerryHonig2006b,caines1}. 
The framework, one the one hand, simplifies PSP auctions by considering the users demanding as much of the resources 
as possible, which is a reasonable assumption in many cases since players often cannot estimate their demand accurately. On the other hand, it presents a unifying optimization framework which also
allows analysis and design of games with non-separable player utilities.

The literature on pricing  schemes is even richer than mechanism design one, especially in the networking 
community (see e.g. \cite{tansuphd,srikantbook} and references therein). The pricing mechanism framework in
Section~\ref{sec:pricing} extends those results by building on~\cite{gamenetsne,cdc09lacra},
and taking into account all of the criteria in Table~\ref{tbl:mechdesign}. 
Among other things, the presented framework captures different types of global objectives, 
e.g. quality-of-service regions, information limitations, and system dynamics. The fact
that an iterative pricing scheme similar to the one in~\cite{kelly1} 
is required to satisfy all three criteria in Table~\ref{tbl:mechdesign} is an interesting result.
This can be attributed to the designer having less leverage (no explicit resource allocation)
in pricing mechanisms compared to auction-based ones.

There are many impossibility results in the mechanism design literature \cite{holger-allerton,hurwicz1972,dasguptaHammondMaskin1979,zhou1991}. The framework presented in this paper
does not actually contradict these results for in many cases analyzed one of the criteria 
in Table~\ref{tbl:mechdesign} is achieved only approximately. Similar approximations are quite
common in game theory literature, e.g. $\eps$-NE. Hence, such relaxations are part of 
the constructive approach adopted here, and show its value.

We present next a brief survey of the literature on auctions, pricing, and mechanism design in general. 

\subsection*{Literature Review}

\subsubsection*{Auctions and Pricing in Games}
The book \cite{algorithmicGameTheory2007} provides a good overview of a variety of topics ranging from mechanism design, inefficiency of the equilibria, preference-compatibility issues and certain types of auctions. Lazar and Semret \cite{lazarSemret1998} have shown that a certain form of the Nash equilibrium holds when the progressive second price auction is applied by independent sellers on each link of a network with arbitrary topology.

Wu et al. \cite{wuWangLiuClancy2009} have proposed a repeated spectrum sharing game with cheat-proof strategies. By using the punishment-based repeated game, users get the incentive to share the spectrum in a cooperative way; and through mechanism-design-based and statistics-based approaches, user honesty is further enforced. Sengupta and Chaterjee \cite{senguptaChaterjee2009} have presented an economic framework that can be used to guide the dynamic spectrum allocation process and the service pricing mechanisms that the providers can use. They have demonstrated how pricing can be used as an effective tool for providing incentives to the providers to upgrade their network resources and offer better services. Keon and Anandalingam \cite{keonAnandalingam2003} have formulated the optimal pricing problem as a nonlinear integer expected revenue optimization problem. They simultaneously solve for prices and the resource allocations necessary to provide connections with guaranteed QoS. Maille and Tuffin \cite{mailleTuffin2006} have analyzed a multi-bid auction scheme where users compete for bandwidth at a link by submitting e.g. amount of bandwidth asked, associated unit price so that the link allocates the bandwidth and computes the charge according to the second price principle. In this case, the backbone network is overprovisioned and the access networks have a tree structure. The works \cite{yangPrasadWang2009, niyatoHossain2008, curescuTehrani2008} have discussed other interesting approaches in relation to auctions and bidding algorithms.

\subsubsection*{Strategy Proofness and Efficiency}

The property of \emph{strategy-proofness} is a fairly restrictive property. When it is combined with the property of \emph{efficiency}, this often leads to special solutions. Hurwicz \cite{hurwicz1972} has shown that there is no \emph{strategy-proof}, \emph{efficient} and \emph{individually rational} mechanism in $2$ user $2$ resource pure exchange economy. Dasgupta et al. \cite{dasguptaHammondMaskin1979} have attempted to replace \emph{individual rationality} in Hurwicz's result with a weaker axiom of \emph{non--dictatorship}. Ameliorating upon both results,
Zhou \cite{zhou1991} has established an impossibility result that there is no \emph{strategy-proof}, \emph{efficient} and \emph{non--dictatorial} mechanism in $2$ user $m$ resource ($m\geq 2$) pure exchange economies. He conjectures that there are no \emph{strategy-proof}, \emph{efficient} and \emph{non--inversely dictatorial} mechanisms in the case of $3$ or more users. In \cite{katoOhseto2002}, Zhou's conjecture has been examined and a new class of \emph{strategy-proof} and \emph{efficient} mechanisms in the case of four or more users (operators) are discovered. 

\subsubsection*{Mechanism Design in Wireless Networks}

Huiping and Junde \cite{huipingJunde2004} have proposed a \emph{strategy-proof} trust management system in the
context of wireless ad-hoc networks. This system is preference-compatible in which nodes can honestly report trust evidence and truthfully compute and broadcast trust value of themselves and other nodes. Pal and Tardos \cite{palTardos2003} have developed a general method for turning a primal-dual algorithm into a group \emph{strategy-proof} cost-sharing mechanism. The method was used to design approximately budget-balanced cost sharing mechanisms for two NP-complete problems: metric facility location, and single source rent-or-buy network design. Both mechanisms are competitive, group \emph{strategy-proof} and recover a constant fraction of the cost. The works \cite{suriInfocom2006, suriNarahariManjunath2006} 
have presented a game theoretic framework for truthful broadcast protocol and \emph{strategy-proof} pricing mechanism. Guanxiang et al. \cite{guanxiangYanZongkaiWenqing2004} have proposed an auction-based admission control and pricing mechanism for priority services, where higher priority services are allocated to the users who are more sensitive to delay, and each user pays a congestion fee for the external effect caused by their participation. The mechanism is proved to be \emph{strategy-proof} and \emph{efficient}. Wang and Li \cite{wangLi2004} have addressed the issue of user cooperation in selfish and rational wireless networks using an incentive approach. They have presented a \emph{strategy-proof} pricing mechanism for the unicast problem and given a time optimal method to compute the payment in a centralized manner and discussed implementation of the algorithm in a distributed manner. In addition, they have presented a truthful mechanism when a node only colludes with its neighbors. Garg et al. \cite{gargNarahariGujar1, gargNarahariGujar2} have provided a tutorial on mechanism design and attempted to apply it to various concepts in engineering. Huang et al. \cite{huangBerryHonig2006a, huangBerryHonig2006b} have utilized SIR and power auctions to allocate resources in a wireless scenario and presented an asynchronous distributed algorithm for updating power levels and prices to characterize convergence using supermodular game theory. Wu et al. \cite{wuWangLiuClancy2009} have proposed a repeated spectrum sharing game with cheat-proof strategies. They have proposed specific cooperation rules based on maximum total throughout and proportional fairness criteria. Sharma and Teneketzis \cite{sharmaTeneketzis2009} have presented a decentralized algorithm to allocated transmission powers, such that the algorithm takes into account the externality generated to the other users, satisfies the informational constraints of the system, and overcomes the inefficiency of pricing mechanisms.

\subsubsection*{Interference Coupling}

An axiomatic approach to interference functions has been proposed by Yates in \cite{yates1995} with extensions in \cite{huangYates1998, leungSungWongLok2004}. The Yates framework of \emph{standard interference functions} is general enough to incorporate cross-layer effects and it serves as a theoretical basis for a variety of algorithms. Certain examples include: beamforming \cite{bengtsson01}, CDMA \cite{ulukus98}, base station assignment, robust design and networking \cite{bocheSchubert_ITNet2008}. The framework can be used to combine power control and adaptive receiver strategies. Certain examples, where this has been successfully achieved are as follows. In \cite{bambosChenPottie2000} it has been proposed to incorporate admission control to avoid unfavorable interference scenarios. In \cite{xiaoShroffChong2003} the QoS requirements have been adapted to certain network conditions. In \cite{koskieGajic2005} a power control algorithm using fixed-point iterations has been proposed for a modified cost function, which permits control of convergence behavior by adjusting fixed weighting parameters.


\section{Conclusions} \label{sec:conclusion}

An unified framework is presented for developing mechanisms such
as auctions and pricing schemes, which is applicable to
a fairly general class of strategic (noncooperative) games on
networked systems. It has been shown that although the participating players of these mechanisms
are selfish, the outcome  is optimal with respect to a global criterion (e.g.
maximizing a social welfare function), preference-compatible, and
strategy-proof. The mechanism designer achieves these objectives by
imposing rules and prices to the players. In auction-based
 mechanisms the designer explicitly allocates the resources based on bids of
the participants in addition to setting prices. In pricing
mechanism, however, global objectives are enforced by only
charging the players for the resources they used. The unified
framework as well as its information structures are illustrated
through specific example resource allocation problems from wireless
and wired networks.

The presented mechanism design framework can be extended in multiple directions.
One immediate extension is multiple decision variables. A related but more challenging
extension is multi-criteria decision making, where preferences are not simply expressed
through scalar-valued utility or objective functions. Some of the other open research directions 
follow directly from relaxing the assumptions in Section~\ref{sec:model}. Improving the robustness of the 
incentive mechanisms against malicious units who do not follow the rules
and detection of such misbehavior is of both practical and theoretical interest.
In parallel, the relaxation of the assumption on designer's honesty leads
to similarly interesting questions such as how can a unit detect and respond to
misbehavior (e.g. unfairness) of the designer. Additional future research directions
include a more precise quantification of asymptotic approximations in the paper
and analysis of networking effects between players.

%

\section*{Acknowledgements}

This work is supported in part by Deutsche Telekom Laboratories. The authors wish to thank
Lacra Pavel as a collaborator of the ongoing research project.

\bibliographystyle{IEEEtran}


\end{document}